\newtheorem{algorithm}{Procedure}
\journalname{Discrete Event Dynamic Systems}
\newcommand{\MR}[1]{\textcolor{blue}{MR: #1}}
\begin{document}

\title{Synthesis of Covert Actuator Attackers for Free 
%
}


\author{
Liyong Lin  \and Yuting Zhu \and Rong Su 
}

\authorrunning{L. Lin, Y. Zhu,  R. Su} 

\institute{Liyong Lin, Yuting Zhu, Rong Su  \at
              Electrical and Electronic Engineering, Nanyang Technological University, Singapore \\
\email{liyong.lin@ntu.edu.sg, yuting002@e.ntu.edu.sg, rsu@ntu.edu.sg} 
          }
          
\vspace{-20pt}
\date{Received: date / Accepted: date}

\vspace{-10pt}

\maketitle

\begin{abstract}
In this paper, we shall formulate and address a problem of covert actuator attacker synthesis for  cyber-physical systems that are modeled by discrete-event systems. We assume the actuator attacker  partially observes the execution of the closed-loop system and is able to modify each control command issued by the supervisor on a specified attackable subset of  controllable events. We   provide  straightforward but in general exponential-time reductions, due to the use of subset construction procedure, from the covert actuator attacker synthesis problems to the  Ramadge-Wonham supervisor synthesis problems. It then follows that it is possible to use the many techniques and tools already developed for solving the  supervisor synthesis problem to solve the covert actuator attacker synthesis problem for free. In particular, we show that, if the attacker cannot attack unobservable events to the supervisor, then the reductions can be carried out in polynomial time.  We also provide a brief discussion on some other conditions under which the exponential blowup in state size can be avoided. Finally, we show how the reduction based synthesis procedure can be extended for the synthesis of  successful covert actuator attackers that  also eavesdrop the control commands issued by the supervisor.   
\keywords{
cyber-physical systems \and discrete-event systems \and supervisory control  \and actuator attack \and partial observation
}
\end{abstract}

\makeatletter
\let\origsection\section
\renewcommand\section{\@ifstar{\starsection}{\nostarsection}}

\newcommand\nostarsection[1]
{\sectionprelude\origsection{#1}\sectionpostlude}

\newcommand\starsection[1]
{\sectionprelude\origsection*{#1}\sectionpostlude}

\newcommand\sectionprelude{%
  \vspace{0.1em}
}

\newcommand\sectionpostlude{%
  \vspace{0.1em}
}
\makeatother

\vspace{-10pt}
\section{Introduction}
\vspace{-4pt}

Recently, security of cyber-physical systems has drawn much research interest within the discrete-event systems and formal methods community~\cite{CarvalhoEnablementAttacks},~\cite{Carvalho2018},~\cite{Su2018},~\cite{Goes2017},~\cite{Lanotte2017},~\cite{Jones2014},~\cite{WTH17},\\~\cite{LACM17},~\cite{Lima2018},~\cite{K2016},~\cite{R17},~\cite{Lin2018},~\cite{Zhu2018},~\cite{LZS19},~\cite{WBP19},~\cite{YL19},~\cite{HMR18}. For a recent survey on the discrete-event systems based  approach for the security of  cyber-physical systems, the reader is referred to~\cite{Aida19}.
 In this paper, we shall focus on  discrete-event systems as our model of cyber-physical systems and consider the problem of attacker synthesis, as a major step towards solving the resilient supervisor synthesis problem~\cite{LZS19}. In particular, we consider the synthesis of a successful actuator attacker~\cite{Lin2018},~\cite{LZS19}. We assume the goal of the actuator attacker is to remain covert in the process of attacking the closed-loop systems until it causes damages, much in parallel to the framework of~\cite{Su2018}. An actuator attacker is  said to be successful if it achieves its goal. Correspondingly, a supervisor is said to be resilient if a successful actuator attacker does not exist. We also consider the synthesis of (successful) covert actuator attackers that can eavesdrop the control commands issued by the supervisor, extending our recent work~\cite{Lin2018}. 
 
Some of the existing works  consider the problem of deception attack~\cite{Su2018},~\cite{Goes2017},~\cite{WTH17}, where the attacker would accomplish the attack goal by altering the sensor readings. There are also works that deal with actuator attacks as well as sensor attacks~\cite{Carvalho2018},~\cite{LACM17},~\cite{Lima2018},\\~\cite{WBP19}. We provide some comparisons between the existing works and this work in the following.
\begin{enumerate}
    \item The works of~\cite{Carvalho2018},~\cite{LACM17} and~\cite{Lima2018}  do not require the  attackers to possess any knowledge of the system's models\footnote{There is in fact an exception. In sensor insertion scenario,~\cite{Carvalho2018} assumes the attacker knows the model of the supervisor.} and, in particular, they only consider the reachability of damages as the attackers' goal; consequently, they only need to consider the ``worst case" attacker~\cite{Carvalho2018} that performs enablement attacks wherever it is possible, which makes the problem of attacker synthesis  irrelevant in their problem setup. In contrast, we consider those attackers that know the models of the system, including the plant's and the supervisor's models. This allows the attacker to make informed attack decisions and also makes it possible to model different goals and constraints of the attackers. In particular, this does allow us to model the synthesis of covert attackers~\cite{Su2018},~\cite{Lin2018},~\cite{LZS19}. Indeed, in this paper, we consider the synthesis of covert attackers with both  damage-nonblocking goal and damage-reachable goal. In addition, compared with~\cite{Carvalho2018},~\cite{LACM17} and~\cite{Lima2018}, we allow the attacker and the supervisor to have different observation capabilities. While disablement attack is considered to be   not useful in~\cite{Carvalho2018},~\cite{LACM17} and~\cite{Lima2018}, it is often crucial for the success of covert attackers, when combined with enablement attacks.   
    \item The only means of system protection in the setup of~\cite{Carvalho2018} and~\cite{LACM17} is by making use of an intrusion detection module for detecting the presence of an attacker, along with a fixed supervisor that may not have been properly designed against the attackers. 
 Their setup does not consider the possibility of modifying the supervisor to render attack  impossible or harder. In comparison, the resilient supervisor synthesis based approach in~\cite{Su2018},~\cite{LZS19} allows and explicitly considers the synthesis of a  supervisor that can prevent the damages caused by attackers,  by making attack impossible or by making attack futile\footnote{The supervisor could be designed in such a way that either every disabled event  is non-attackable, which makes (enablement) attack impossible, or every disabled attackable event cannot bring the system to damage, which makes attack futile.}, even in the absence of a monitor, or by discouraging attack attempt that may cause the violation of covertness, by using a monitor~\cite{Lin2018}. The feature that 
 the presence of a monitor can discourage attack attempt  is not available in~\cite{Carvalho2018} and~\cite{LACM17}.
 The work of~\cite{WBP19}  attempts to  generalize~\cite{Su2018} and uses non-deterministic finite state transducers as the unifying models to address the security against combined actuator attacks and sensor attacks. However, it only considers  resilient supervisor  verification  upon newly defined controllability condition. The problem of synthesis of resilient supervisor remains unsolved if the controllability condition is not satisfied. Similarly,~\cite{Carvalho2018} and~\cite{LACM17} only address the problem of verification, instead of synthesis, for security. We note that solving the problem of attacker synthesis in this work naturally solves the resilient supervisor verification problem and can be used for solving the resilient supervisor synthesis problem as well, as explained in~\cite{LZS19}.  While~\cite{Lima2018} also studies undetectable attacks (in addition to detectable attacks), they consider the undetectability of the ``worst case" attacker and the UNA-security verification condition~\cite{Lima2018} under which security against undetectable ``worst case" attacker can be established. We start from covertness goal and synthesize covert attackers fulfilling  damage-infliction goals. Again, \cite{Lima2018} only studies the problem of verification, instead of synthesis, for security.
 While resilience  against risky attackers of~\cite{Carvalho2018},~\cite{LACM17} and~\cite{Lima2018} implies resilience against covert attackers, if the system designer has the additional information that any potential attacker intends to remain covert, this information can be used and often allows a larger design space of resilient supervisors for the system designer.
 \item In~\cite{Lin2018},~\cite{Zhu2018} and~\cite{LZS19}, we have considered  partially-observing attackers that also eavesdrop the control commands issued by the supervisor. Given the same observation capability over the plant events, control command eavesdropping attackers are more powerful than the counterparts that do not eavesdrop the control commands.  In~\cite{HMR18}, a notion of a powerful  attacker is defined for passive attackers; a passive attacker is considered to be powerful in~\cite{HMR18} if it can directly query some information about the current state of the plant. A control command eavesdropping attacker in~\cite{Lin2018} can be considered as a powerful attacker in the sense of~\cite{HMR18}. In particular, the supervisor's control command does encode some knowledge of the partially-observing supervisor on the execution of the plant. We consider the synthesis of both types of covert actuator  attackers and develop reduction based synthesis procedures. Compared with~\cite{Lin2018} that also addresses the covert actuator attacker synthesis problem, we do not require normality $\Sigma_c \subseteq \Sigma_o, \Sigma_{c, A} \subseteq \Sigma_{o, A}$ and the  synthesis procedure is much simpler.
\end{enumerate}
The paper is organized as follows: Section~\ref{sec: prel} is devoted to the preliminaries. Then, in Section~\ref{section:SS}, we shall provide a detailed explanation about the system setup, including a formalization of the supervisor, the non-eavesdropping actuator attacker, the damage automaton, the monitoring mechanism, the attacked closed-loop system and the attack goal. In Section~\ref{section: AS}, we then provide  straightforward reductions from the actuator attacker synthesis problems to the  supervisor synthesis problems. In Section~\ref{section: eavesdrop}, we then consider control command eavesdropping attackers. Finally, discussions and conclusions are presented in Section~\ref{section:DC}. 
\vspace{-12pt}
\section{Preliminaries}
\label{sec: prel}
\vspace{-4pt}
We assume the reader is familiar with the basics of supervisory control theory~\cite{WMW10},~\cite{CL99} and automata theory~\cite{HU79}. In this section, we shall recall the basic notions and terminology that are necessary to understand this work. 

For any two sets $A$ and $B$, we use $A \times B$ to denote their Cartesian product and use $A-B$ to denote their set difference. $|A|$ is used to denote the cardinality of set $A$. 

A (partial) finite state automaton $G$ over alphabet  $\Sigma$ is a 5-tuple $(Q, \Sigma, \delta, q_0, Q_m)$, where $Q$ is the finite set of states, $\delta: Q \times \Sigma \longrightarrow Q$ the partial transition function\footnote{We write $\delta(q, \sigma)!$ to mean  $\delta(q, \sigma)$ is defined and write $\neg \delta(q, \sigma)!$ to mean $\delta(q, \sigma)$ is undefined.  As usual, $\delta$ is naturally extended to the partial transition function $\delta: Q \times \Sigma^* \rightarrow Q$ such that, for any $q \in Q$, any $s \in \Sigma^*$ and any $\sigma \in \Sigma$, $\delta(q, \epsilon)=q$ and $\delta(q, s\sigma)=\delta(\delta(q,s),\sigma)$. We define $\delta(Q' ,\sigma)=\{\delta(q, \sigma) \mid q \in Q'\}$ for any $Q'\subseteq Q$.}, $q_0 \in Q$ the initial state and $Q_m \subseteq Q$ the set of marked states. Let $L(G)$ and $L_m(G)$ denote the closed-behavior and the marked-behavior of $G$, respectively~\cite{WMW10}. When $Q_m=Q$, we also write $G=(Q, \Sigma, \delta, q_0)$ for simplicity. For any two finite state automata $G_1=(Q_1, \Sigma_1, \delta_1, q_{1,0}, Q_{1, m}), G_2=(Q_2, \Sigma_2, \delta_2, q_{2,0}, Q_{2, m})$, we write $G:=G_1 \lVert G_2$ to denote their synchronous product. Then, we have that
$G = (Q := Q_1\times Q_2,\Sigma:=\Sigma_1\cup\Sigma_2,\delta := {\delta}_1 \lVert {\delta}_2,q_0:=(q_{1,0},q_{2,0}), Q_m:=Q_{1, m} \times Q_{2, m})$, 
where the (partial) transition function $\delta$ is defined as follows:
for any $q = (q_1,q_2)\in Q$ and any $\sigma \in \Sigma$, 
\begin{center}$ \delta(q,\sigma):=\left\{
\begin{array}{rcl}
({\delta}_1(q_1,\sigma),q_2), && \text{if } {\sigma \in {\Sigma}_1}- {\Sigma}_2 \\
(q_1,{\delta}_2(q_2,\sigma)), && \text{if } {\sigma \in {\Sigma}_2}- {\Sigma}_1 \\
({\delta}_1(q_1,\sigma),{\delta}_2(q_2,\sigma)), && \text{if } {\sigma \in {\Sigma}_1}\cap {\Sigma}_2 \\
\end{array} \right. $
\end{center}
\noindent
A control constraint over $\Sigma$ is a tuple $(\Sigma_c, \Sigma_o)$ of sub-alphabets of $\Sigma$, where $\Sigma_o \subseteq \Sigma$ denotes the subset of observable events (for the supervisor) and $\Sigma_c \subseteq \Sigma$  denotes the subset of controllable events (for the supervisor). Let $\Sigma_{uo}=\Sigma-\Sigma_o\subseteq \Sigma$ denote the subset of unobservable events (for the supervisor) and let $\Sigma_{uc}=\Sigma-\Sigma_c \subseteq \Sigma$ denote the subset of uncontrollable events (for the supervisor). For each sub-alphabet $\Sigma' \subseteq \Sigma$, the natural projection $P_{\Sigma'}: \Sigma^* \rightarrow \Sigma'^*$ is defined, which is extended to a map between languages as usual~\cite{WMW10}. Let $G=(Q, \Sigma, \delta, q_0)$. We shall abuse the notation and define $P_{\Sigma'}(G)$ to be the finite state automaton $(2^Q, \Sigma, \Delta, UR_{G, \Sigma-\Sigma'}(q_0))$ over $\Sigma$, where the unobservable reach $UR_{G, \Sigma-\Sigma'}(q_0):=\{q \in Q \mid  \exists s \in (\Sigma-\Sigma')^*, q=\delta(q_0, s)\}$ of $q_0$ with respect to the sub-alphabet $\Sigma-\Sigma' \subseteq \Sigma$ is the initial state, and the partial transition function $\Delta: 2^Q \times \Sigma \longrightarrow 2^Q$ is defined as follows.
\begin{enumerate}
\item 
For any $\varnothing \neq Q' \subseteq Q$ and any $\sigma \in \Sigma'$, $\Delta(Q', \sigma)=UR_{G, \Sigma-\Sigma'}(\delta(Q', \sigma))$,  where $UR_{G, \Sigma-\Sigma'}(Q'):=\bigcup_{q \in Q'}UR_{G, \Sigma-\Sigma'}(q)$ for any $Q' \subseteq Q$;
\item for any $\varnothing \neq Q' \subseteq Q$ and any $\sigma \in \Sigma-\Sigma'$, $\Delta(Q', \sigma)=Q'$. 
\end{enumerate}
In particular, we shall note that $P_{\Sigma'}(G)$ is over the alphabet $\Sigma$ and there is no transition defined at the state $\varnothing \in 2^Q$. A finite state automaton $G=(Q, \Sigma, \delta, q_0, Q_m)$ is said to be non-blocking if every reachable state in $G$ can reach a marked state in $Q_m$~\cite{WMW10}. 

\vspace{-12pt}
\section{System Setup }
\label{section:SS}
\vspace{-4pt}
In this section, we shall introduce the system setup that involves non-eavesdropping actuator attacker. Intuitive explanations about the basic underlying ideas and assumptions are provided in Section~\ref{subs:BI}. After that, a formal setup is provided in Section~\ref{subs: FS}.

\vspace{-6pt}
\subsection{Basic Ideas}
\label{subs:BI}
The plant $G$, which is a finite state automaton over $\Sigma$, is under the control of a partially observing supervisor $S$ over $(\Sigma_c, \Sigma_o)$, which is also given by a finite state automaton over $\Sigma$. In addition, we assume the existence of an actuator attacker $A$  that partially observes the execution of the closed-loop system and can modify the control command $\gamma$ (issued by the supervisor) on a specified attackable subset $\Sigma_{c, A} \subseteq \Sigma_c$ of  controllable events. The plant $G$ follows the modified control command $\gamma'$ instead of $\gamma$. In this work, we assume that the supervisor issues the newly generated control command each time whenever it observes an event transition and when the system starts~\cite{LZS19}, and we assume
the plant $G$ follows the (old) modified control command $\gamma'$ whenever an event $\sigma \in \Sigma_{uo}$ unobservable to the supervisor is executed in $G$. For simplification of analysis, we shall assume $\Sigma_{o, A} \subseteq \Sigma_o$. 
We impose this assumption as we do not want the attacker to observe $\sigma$ if $\sigma \in \Sigma_{uo}$ is executed in the plant, when the plant will reuse the old modified control command $\gamma'$\footnote{This assumption may be removed, by imposing the constraint that the attacker does not change its attack decision whenever $\sigma \in \Sigma_{uo}$ is executed in the plant, even when the execution of $\sigma$ may be observed by the attacker, if $\sigma \in \Sigma_{uo} \cap \Sigma_{o, A}$, and thus may lead to state change in the attacker. The analysis of this general case is left to the future work.}. We assume the supervisor $S$ is augmented with a monitoring mechanism that monitors the execution of the closed-loop system (possibly under attack). If $\sigma \in \Sigma_{c, A}$ is enabled by the supervisor in $\gamma$, then the attacker is not discovered\footnote{The supervisor is not sure whether $\sigma$ has been disabled by an attacker, even if disabling $\sigma$ may result in deadlock, as the supervisor is never sure whether: 1) deadlock has occurred due to actuator attack, or 2) $\sigma$ will possibly fire soon (according to the internal mechanism of the plant).} by the supervisor when it disables $\sigma$ in $\gamma'$.  The supervisor is able to conclude the existence of an attacker the first moment when it observes an inconsistency between what it has observed and what shall be observed in the absence of  attacker. 
We assume that the attacker has a complete knowledge about the models of the plant $G$ and the supervisor $S$, as well as the control constraint $(\Sigma_{c}, \Sigma_{o})$ and what constitutes damages. Intuitively, the goal of the attacker is to remain covert in the process of attacking the closed-loop system until it causes damages (by generating certain damaging strings). More precisely, the covertness property requires the attacker not to reach a situation where its existence is detected by the supervisor while no damage has been or can be caused. We assume that the supervisor has a mechanism for halting the execution of the closed-loop system after discovering an actuator attack. 

\vspace{-10pt}
\subsection{Formal Setup}
\label{subs: FS}
In this section, we explain the formal setup. We  first need to introduce and present a formalization of the system components.

{\bf Supervisor}: In the absence of attacker, a supervisor over control constraint $(\Sigma_c, \Sigma_o)$ is modeled by a finite state automaton $S=(X, \Sigma, \zeta, x_0)$ that satisfies the  controllability and observability constraints~\cite{B1993}:
\begin{enumerate}
\item [$\bullet$] ({\em controllability}) for any state $x \in X$ and any uncontrollable event $\sigma \in \Sigma_{uc}$, $\zeta(x, \sigma)!$,
\item [$\bullet$] ({\em observability}) for any state $x \in X$ and any unobservable event $\sigma \in \Sigma_{uo}$, $\zeta(x, \sigma)!$ implies $\zeta(x, \sigma)=x$,
\end{enumerate}


{\bf Plant}: The plant is modeled by a finite state automaton $G=(Q, \Sigma, \delta, q_0)$ as usual. Whenever the plant fires an observable transition $\delta(q, \sigma)=q'$, it sends the observable event $\sigma$ to the supervisor. 

{\bf Actuator Attacker}: 
We assume the attacker knows the model of the plant $G$ and the model of the supervisor $S$. We here impose some restrictions on the capability of the attacker. Let $\Sigma_{o, A} \subseteq \Sigma$ denote the subset of (plant) events that can be observed by the attacker. Let $\Sigma_{c, A} \subseteq \Sigma_c$ denote the set of attackable events. The attacker is able to modify the
control command $\gamma$ issued by the supervisor on the subset $\Sigma_{c, A}$.

We henceforth name $(\Sigma_{o, A}, \Sigma_{c, A})$ as an attack constraint. An actuator attacker over attack constraint  $(\Sigma_{o, A}, \Sigma_{c, A})$ is modeled by a finite state automaton $A=(Y, \Sigma, \beta, y_0)$ that satisfies the following constraints. 
\begin{enumerate}
\item [$\bullet$] ({\em A-controllability}) for any state $y \in Y$ and any unattackable event $\sigma \in \Sigma-\Sigma_{c, A}$, $\beta(y, \sigma)!$,
\item [$\bullet$] ({\em A-observability}) for any state $y \in Y$ and any unobservable event $\sigma \in \Sigma-\Sigma_{o, A}$ to the attacker, $\beta(y, \sigma)!$ implies $\beta(y, \sigma)=y$, 
\end{enumerate}


Compared with~\cite{Su2018},~\cite{Lin2018} and~\cite{LZS19}, in this work we have adopted a unifying model for the  modeling of the plant, the supervisor and the attacker. In particular, it is straightforward to see that an attacker $A$ over attack constraint $(\Sigma_{o, A}, \Sigma_{c, A})$ can also be viewed as a supervisor over control constraint $(\Sigma_{c, A}, \Sigma_{o, A})$. We here shall remark that we do not impose any assumption  on the control constraint $(\Sigma_c, \Sigma_o)$ and we do not restrict the attack constraint $(\Sigma_{o, A}, \Sigma_{c, A})$ either, except for the natural requirement\footnote{The requirement $\Sigma_{c, A} \subseteq \Sigma_c$ is not essential, as it is only used in Remark 2, i.e., used in the discussion of when the reduction can be carried out in polynomial time.}  $\Sigma_{c, A} \subseteq \Sigma_c$   and the technical requirement\footnote{The relaxation of this requirement makes our synthesis approach  incomplete in general.} that $\Sigma_{o, A} \subseteq \Sigma_o$. In particular, we do not require $\Sigma_{o, A} = \Sigma_o$, which is required by~\cite{Carvalho2018},~\cite{LACM17} and~\cite{Lima2018}, and we also do not require normality $\Sigma_c \subseteq \Sigma_o, \Sigma_{c, A} \subseteq \Sigma_{o, A}$, which is assumed in~\cite{Lin2018}. We allow both enablement and disablement attacks; in particular, disablement attacks can be essential for the success of covert attackers, which is quite different from~\cite{Carvalho2018},~\cite{LACM17} and~\cite{Lima2018}. 

{\bf Damage Automaton}: To specify in a general manner what strings can cause damages,  we here shall adopt a complete finite state automaton $H=(W, \Sigma, \chi, w_0, W_m)$~\cite{Lin2018}, which is referred to as a damage automaton. In particular, each string $s \in L_m(H)$  is a damage-inflicting string that the attacker would like the attacked closed-loop system to generate. In the special case of enforcing or testing state avoidance property for the closed-loop system under attack, we can get rid of $H$ and introduce the set $Q_{bad} \subseteq Q$ of bad states to avoid in the plant $G$. In general, $H$ does not have to be the same as the specification that is used to synthesize the (possibly insecure) supervisor $S$. 

{\bf Monitoring Mechanism}: In the presence of an attacker, we assume the supervisor is augmented with a monitoring 
mechanism. We  assume, without loss of generality, that the execution of the attacked closed-loop system is  immediately halted\footnote{ In~\cite{LACM17}, when the supervisor detects the presence of an attacker, all controllable events will be disabled, while  uncontrollable events can still occur (i.e., immediate halt by  reset is impossible). This is not difficult to accommodate, by working on the damage automaton $H$ that defines the set of damage-inflicting strings $L_m(H)$. In particular, if uncontrollable events can still occur after the detection, then we only need to use the new damage automaton $H'$ with $L_m(H')=L_m(H) \cup L_m(H)/\Sigma_{uc}^*$, where $L_m(H)/\Sigma_{uc}^*:=\{s \in \Sigma^* \mid \exists s' \in \Sigma_{uc}^*, ss' \in L_m(H)\}$, as the renewed set of damage-inflicting strings. An alternative approach is to add self-loops of each uncontrollable event at the state $x_{halt}$ of $S^A$ and at the state $\varnothing$ of $P_{\Sigma_o}(S\lVert G)$ to allow execution of uncontrollable events after the detection; the definition of these two components will be explained soon.} once the supervisor discovers the presence of an attacker. The supervisor online records its observation $w \in \Sigma_o^*$ of the execution of the (attacked) closed-loop system. It  concludes the existence of an attacker
and then halts the execution of the (attacked) closed-loop system the first time when it observes
some string $w \notin P_{\Sigma_o}(L(S \lVert G))$\footnote{The same monitoring mechanism indeed works for both covert attackers and risky attackers~\cite{LZS19}.}. 

{\bf Attacked Closed-loop System}: The attacked closed-loop system consists of the three main components $G$, $S$ and $A$. However, the synchronous product of $G$, $S$ and $A$ is not the attacked closed-loop system for the 
following two reasons: 1) The effect of attack on the supervisor is not reflected in $S$; 2) The monitoring mechanism has not been encoded in $S$.

The first problem can be solved by computing the (totally) enablement-attacked supervisor $S^A$. Let $S=(X, \Sigma, \zeta, x_0)$.  We shall let $S^A=(X \cup \{x_{halt}\}, \Sigma, \zeta^A, x_0)$ denote the enablement-attacked supervisor, where $x_{halt} \notin X$ denotes a distinguished halting state and  $\zeta^A: (X \cup \{x_{halt}\}) \times \Sigma \longrightarrow  (X \cup \{x_{halt}\})$ is obtained from $\zeta$ by adding the following transitions (in addition to a  copy of the transitions of $\zeta$).
\begin{enumerate}
    \item [a)] for any $x \in X$ and any $\sigma \in \Sigma_{uo} \cap \Sigma_{c, A}$, if $\neg \zeta(x, \sigma)!$, then $\zeta^A(x, \sigma)=x$
    \item [b)] for any $x \in X$ and any $\sigma \in \Sigma_{o} \cap \Sigma_{c, A}$, if $\neg \zeta(x, \sigma)!$, then $\zeta^A(x, \sigma)=x_{halt}$
\end{enumerate}
Intuitively, to obtain $S^A$, for each state $x \in X$ and each event $\sigma \in \Sigma_{uo} \cap \Sigma_{c, A}$, if $\sigma$ is not defined at state $x$, we then add a self-loop at state $x$ labeled by $\sigma$; for each state $x \in X$ and each  event $\sigma \in \Sigma_o \cap \Sigma_{c, A}$, if $\sigma$ is not defined at state $x$, we then add  a transition from $x$ to the halting state $x_{halt}$ labeled by $\sigma$. In particular, we note that there is no transition defined at state $x_{halt}$ and every state in $S^A$ is a marked state. 

With $S^A$, we have effectively implemented an ignorant attacker that performs enablement attacks wherever it is possible, that is, an attacker that enables each attackable event at each state. Then, the rest of the attack decisions for the actuator attacker is to disable attackable events properly, based on its own partial observation and its knowledge on the system's models, to ensure covertness and  damage-infliction. This brings  the covert actuator attacker synthesis problem closer to the supervisor synthesis problem.

The second problem has been partially solved in the solution to the first problem. In particular, if $S^A$ enters $x_{halt}$, then an attack has necessarily happened. This is insufficient, however. Recall that the supervisor  concludes the existence of an attacker
and then halts the execution of the (attacked) closed-loop system the first time when it observes
some string $w \notin P_{\Sigma_o}(L(S \lVert G))$. We shall denote $S \lVert G=(X \times Q, \Sigma, \lambda, (x_0, q_0))$, where $\lambda=\zeta \lVert \delta$. Then, we have  
$P_{\Sigma_o}(S \lVert G)=(2^{X \times Q}, \Sigma, \Lambda, UR_{S\lVert G, \Sigma-\Sigma_{o}}(x_0, q_0))$,
where $UR_{S\lVert G, \Sigma-\Sigma_{o}}(x_0, q_0)$ denotes the unobservable reach of $(x_0, q_0)$ with respect to the sub-alphabet $\Sigma-\Sigma_o$ in $S\lVert G$ and
the partial transition function $\Lambda: 2^{X \times Q} \times \Sigma \longrightarrow 2^{X \times Q}$ is defined such that, 
\begin{enumerate}
\item 
for any $\varnothing \neq D \subseteq X \times Q$ and any $\sigma \in \Sigma_o$,  $\Lambda(D, \sigma)=UR_{S \lVert G, \Sigma-\Sigma_o}(\lambda(D, \sigma))$;
\item for any $\varnothing \neq D \subseteq X \times Q$ and any $\sigma \in \Sigma-\Sigma_o$, $\Lambda(D, \sigma)=D$. 
\end{enumerate}
Intuitively, we can use $P_{\Sigma_o}(S\lVert G)$ to track the belief $\varnothing \neq D \subseteq X \times Q$ of the supervisor on the current states of the unattacked closed-loop system $S \lVert G$, for  monitoring purpose. When the current belief of the supervisor is  $\varnothing \neq D \subseteq X \times Q$ and $\sigma \in \Sigma_o$ is executed with $\lambda(D, \sigma)=\varnothing$, the supervisor detects an inconsistency and determines that attack necessarily has  happened and thus halts the execution of the attacked closed-loop system. In particular, we note that there is no transition defined at state $\varnothing \in 2^{X \times Q}$ and every state in $P_{\Sigma_o}(S \lVert G)$ is  a marked state. 

Given the plant $G$, the supervisor $S$  and the attacker $A$, the attacked closed-loop system is the synchronous product $G \lVert S^A \lVert P_{\Sigma_o}(S \lVert G) \lVert A$, which is a finite state automaton over $\Sigma$. Since the set of states the supervisor believes itself to be residing in is a singleton (due to the observability constraint for a supervisor) and corresponds to the state that it actually resides in, whenever $P_{\Sigma_o}(S\lVert G)$ is not in the state $\varnothing$, the state space of $G \lVert S^A \lVert P_{\Sigma_o}(S \lVert G) \lVert A$, when restricted to  the reachable state set, is isomorphic to $Q \times (X \cup \{x_{halt}\}) \times 2^Q \times Y$. In the rest, we consider $Q \times (X \cup \{x_{halt}\}) \times 2^Q \times Y$ as the state space whenever convenient.  


{\bf Successful Actuator Attackers}: In this work, we shall focus on covert attackers. 
Given the damage automaton $H=(W, \Sigma, \chi, w_0, W_m)$,
the definition of  covertness is given in the following.
\begin{definition}[Covertness]
Given any plant $G$, any supervisor $S$, any actuator attacker $A$ and any damage automaton $H$, $A$ is said to be covert on $(G, S)$ with respect to $H$ if each state in $\{(q, x, D_Q, y, w) \in Q \times (X \cup \{x_{halt}\}) \times 2^Q \times Y \times W \mid (x=x_{halt} \vee D_Q=\varnothing) \wedge w \notin W_m\}$ is not reachable\footnote{Alternatively, we can use the equivalent set $\{(q, x, D, y, w) \in Q \times (X \cup \{x_{halt}\}) \times 2^{X \times Q} \times Y \times W \mid D=\varnothing \wedge w \notin W_m\}$ for specifying the violation of covertness.} in 
$G \lVert S^A \lVert P_{\Sigma_o}(S \lVert G) \lVert A \lVert H$.
\end{definition}
We note that the set of marked states for $G \lVert S^A \lVert P_{\Sigma_o}(S \lVert G) \lVert A \lVert H$  is $\{(q, x, D_Q, y, w) \in Q \times (X \cup \{x_{halt}\}) \times 2^Q \times Y \times W \mid w \in W_m\}$. In this work, we shall consider two ``possibilistic" interpretation of damaging goals for  covert actuator attackers. For us, the first interpretation of interest is the damage-nonblocking goal, which intuitively states that it is always possible to reach damages in the attacked closed-loop system, wherever the current state is for the attacked closed-loop system. 
That is, for any string $s \in L(G \lVert S^A \lVert P_{\Sigma_o}(S \lVert G) \lVert A)$ generated by the attacked closed-loop system, it is always possible to extend it in $G \lVert S^A \lVert P_{\Sigma_o}(S \lVert G) \lVert A$ to some string in $L_m(H)$. This is formally captured by the next definition.
\begin{definition}[Damage-nonblocking]
Given any plant $G$, any supervisor $S$, any damage automaton $H$ and any covert actuator attacker $A$ on $(G, S)$ with respect to $H$, $A$ is said to be damage-nonblocking on $(G, S)$ with respect to $H$ if 
$G \lVert S^A \lVert P_{\Sigma_o}(S \lVert G) \lVert A \lVert H$ is non-blocking.
\end{definition}
The damage-nonblocking goal ensures the ``necessity" of damage-infliction in the long run, much in parallel to a nonblocking supervisor. The second interpretation of interest for us is the damage-reachable goal~\cite{Lin2018},~\cite{Zhu2018},~\cite{LZS19}, which intuitively states that it is possible to reach damages in the attacked closed-loop system  from the initial state. It is worth mentioning that, due to the existence of events beyond the control of  attackers (that is, any event $\sigma \notin \Sigma_{c, A}$), this is arguably the weakest interpretation for damage-infliction goal. 
\begin{definition}[Damage-reachable]
Given any plant $G$, any supervisor $S$, any damage automaton $H$ and any covert actuator attacker $A$ on $(G, S)$ with respect to $H$,  $A$ is said to be damage-reachable on $(G, S)$ with respect to $H$ if some marked state in 
$G \lVert S^A \lVert P_{\Sigma_o}(S \lVert G) \lVert A \lVert H$ is reachable, that is, $L_m(G \lVert S^A \lVert P_{\Sigma_o}(S \lVert G) \lVert A \lVert H) \neq \varnothing$.
\end{definition}
It is clear that
the damage-nonblocking goal is stronger than the damage-reachable goal, which is given by the proposition below.
\begin{proposition}
Given any plant $G$ over $\Sigma$, any supervisor $S$ over control constraint $(\Sigma_c, \Sigma_o)$, any attack constraint $(\Sigma_{o, A}, \Sigma_{c, A})$ and any damage automaton $H$ over $\Sigma$, if a covert actuator attacker $A$ over $(\Sigma_{o, A}, \Sigma_{c, A})$ is damage-nonblocking on $(G, S)$ with respect to $H$, then $A$ is damage-reachable on $(G, S)$ with respect to $H$.
\end{proposition}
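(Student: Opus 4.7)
The plan is to derive damage-reachability directly from the definition of non-blocking, using only the fact that the initial state of the composed automaton is always reachable from itself. First I would unpack the hypothesis: by definition, $A$ being damage-nonblocking on $(G,S)$ with respect to $H$ means that the automaton $G \lVert S^A \lVert P_{\Sigma_o}(S \lVert G) \lVert A \lVert H$ is non-blocking, i.e., every reachable state can reach some marked state in the set $\{(q, x, D_Q, y, w) \mid w \in W_m\}$.

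Next, I would observe that the initial state of $G \lVert S^A \lVert P_{\Sigma_o}(S \lVert G) \lVert A \lVert H$, namely $(q_0, x_0, UR_{S\lVert G, \Sigma-\Sigma_o}(x_0, q_0), y_0, w_0)$, is reachable via the empty string $\epsilon$. Applying the non-blocking property to this initial state yields the existence of a string $s \in \Sigma^*$ whose execution leads from the initial state to some marked state. This string $s$ belongs to $L_m(G \lVert S^A \lVert P_{\Sigma_o}(S \lVert G) \lVert A \lVert H)$, which is therefore non-empty. By the definition of damage-reachable, $A$ is damage-reachable on $(G, S)$ with respect to $H$.

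There is essentially no obstacle here; the statement is immediate from the definitions once one notes that covertness is inherited from the hypothesis (so the attacker is already covert) and that non-blocking of a synchronous product implies reachability of at least one marked state from the initial state. The only mild care required is to verify that the initial state is indeed reachable in the composed automaton, which holds trivially. I would therefore present the argument as a three-line derivation, without any case analysis or construction.
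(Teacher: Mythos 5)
Your argument is correct and is essentially the same as the paper's own proof: both derive damage-reachability by applying the non-blocking property to the (trivially reachable) initial state of $G \lVert S^A \lVert P_{\Sigma_o}(S \lVert G) \lVert A \lVert H$, concluding that some marked state is reachable and hence $L_m(\cdot) \neq \varnothing$. The paper simply states this in one line, whereas you spell out the intermediate step explicitly; no substantive difference.
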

\vspace{-4pt}
\begin{proof}
This follows straightforwardly  from the definitions of the damage-nonblocking goal and the damage-reachable goal. Indeed, $G \lVert S^A \lVert P_{\Sigma_o}(S \lVert G) \lVert A \lVert H$ is non-blocking implies that some marked state in 
$G \lVert S^A \lVert P_{\Sigma_o}(S \lVert G) \lVert A \lVert H$ is reachable.
\end{proof}
However, in general, a damage-reachable attacker may not be  damage-nonblocking.
\vspace{-12pt}
\section{Synthesis of Non-eavesdropping Actuator Attackers}
\label{section: AS}
\vspace{-4pt}
In this section, we provide  straightforward reductions from the  actuator attacker synthesis problems, with the two  different damaging goals, to the  Ramadge-Wonham  supervisor synthesis problems. It then follows that we can use the  techniques and  tools developed for solving the  supervisor synthesis problems to solve the  actuator attacker synthesis problems. We shall provide the reductions for the two different interpretations of damaging goals, separately. The details of the  reductions are summarized in the next two theorems.
\begin{theorem}
Given any plant $G$ over $\Sigma$, any supervisor $S$ over control constraint $(\Sigma_c, \Sigma_o)$, any attack constraint $(\Sigma_{o, A}, \Sigma_{c, A})$ and any damage automaton $H$ over $\Sigma$, there is a damage-nonblocking covert actuator attacker $A$ over $(\Sigma_{o, A}, \Sigma_{c, A})$ on $(G, S)$ with respect to $H$ iff there is a supervisor $S'$ over $(\Sigma_{c, A}, \Sigma_{o, A})$ such that $S' \lVert P$ avoids reaching the set ${\bf BAD}$ of bad states in $P$ and $S' \lVert P$ is non-blocking, where $P=G \lVert S^A \lVert P_{\Sigma_o}(S\lVert G) \lVert H$ and ${\bf BAD}:=\{(q, x, D_Q, w) \in Q \times (X \cup \{x_{halt}\}) \times 2^Q \times W  \mid (x=x_{halt} \vee D_Q=\varnothing) \wedge w \notin W_m\}$.
\end{theorem}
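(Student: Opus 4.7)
The plan is to exploit the duality noted in the paper: an actuator attacker $A$ over attack constraint $(\Sigma_{o,A}, \Sigma_{c,A})$ is formally just a finite state automaton satisfying A-controllability and A-observability, which are literally the controllability and observability constraints for a supervisor over control constraint $(\Sigma_{c,A}, \Sigma_{o,A})$. So the claim reduces to showing that the attacked closed-loop system $G \lVert S^A \lVert P_{\Sigma_o}(S\lVert G) \lVert A \lVert H$ and the controlled system $S' \lVert P$ coincide whenever we identify $A$ with $S'$, and that the two success criteria translate into each other under this identification.

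First I would establish the equivalence of the two synthesis objects. Given an attacker $A = (Y, \Sigma, \beta, y_0)$, view it as $S' := A$ (taking all states marked). The A-controllability of $A$ gives controllability of $S'$ with respect to $\Sigma - \Sigma_{c,A}$, and the A-observability of $A$ gives observability of $S'$ with respect to $\Sigma - \Sigma_{o,A}$, so $S'$ is a legitimate Ramadge-Wonham supervisor over $(\Sigma_{c,A}, \Sigma_{o,A})$. The converse mapping is symmetric. Under this identification, associativity and commutativity of synchronous product yield
\[
S' \lVert P = A \lVert G \lVert S^A \lVert P_{\Sigma_o}(S\lVert G) \lVert H = G \lVert S^A \lVert P_{\Sigma_o}(S\lVert G) \lVert A \lVert H,
\]
so the reachable state spaces are (up to the isomorphism identified in the paper) the same, indexed by tuples $(q, x, D_Q, y, w)$.

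Next I would translate the two requirements. Covertness of $A$ on $(G, S)$ with respect to $H$ is defined as the unreachability in the product of every state with $(x = x_{halt} \vee D_Q = \varnothing) \wedge w \notin W_m$; projecting out the $Y$-coordinate of such a state yields exactly a tuple in ${\bf BAD}$, and conversely a reachable ${\bf BAD}$ state in $S' \lVert P$ lifts to a reachable covertness-violating state in the attacked closed-loop system. Hence covertness of $A$ is equivalent to $S' \lVert P$ avoiding ${\bf BAD}$. Damage-nonblockingness is literally defined as non-blockingness of $G \lVert S^A \lVert P_{\Sigma_o}(S\lVert G) \lVert A \lVert H$, which is non-blockingness of $S' \lVert P$. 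Putting these equivalences together with the bijection between $A$ and $S'$ yields both directions of the iff.

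I do not expect a genuine obstacle here; the statement is essentially a bookkeeping result whose content is that a suitable ``ignorant maximal enabler'' $S^A$ and a suitable monitor $P_{\Sigma_o}(S \lVert G)$ have already been absorbed into the plant $P$, so that the residual decision problem for the attacker is a plain supervisory control problem with a safety specification (avoid ${\bf BAD}$) and a nonblocking requirement. The only care needed is to verify that the ``everywhere enabling'' nature of $S^A$ on $\Sigma_{c,A}$ and the self-loops on $\Sigma - \Sigma_o$ in $P_{\Sigma_o}(S\lVert G)$ do not prevent $S'$ from freely choosing its attack decisions on $\Sigma_{c,A}$, which is exactly what the construction of $S^A$ in Section~\ref{subs: FS} was designed to guarantee.
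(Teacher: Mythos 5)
Your proposal is correct and follows essentially the same route as the paper: the paper's proof is precisely the observation that, by Definitions 1 and 2, $A$ is a damage-nonblocking covert attacker iff $A \lVert P$ avoids ${\bf BAD}$ and is non-blocking, with $A$ reinterpreted as a supervisor over $(\Sigma_{c,A},\Sigma_{o,A})$ and $P$ as the transformed plant. You merely spell out the bookkeeping (the A-controllability/A-observability correspondence, associativity of $\lVert$, and the projection of the $Y$-coordinate out of the covertness-violating states) that the paper leaves implicit.
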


\begin{proof}
This straightforwardly follows from the definition of a damage-nonblocking covert actuator attacker. In particular, $A$ is a damage-nonblocking covert actuator attacker over $(\Sigma_{o, A}, \Sigma_{c, A})$ on $(G, S)$ with respect to $H$ iff $A \lVert P$ avoids reaching the set ${\bf BAD}$ of bad states in $P$ and $A \lVert P$ is non-blocking, where $P=G \lVert S^A \lVert P_{\Sigma_o}(S \lVert G) \lVert H$ and ${\bf BAD}:=\{(q, x, D_Q, w) \in Q \times (X \cup \{x_{halt}\}) \times 2^Q \times W  \mid (x=x_{halt} \vee D_Q=\varnothing) \wedge w \notin W_m\}$, by Definition 1 and Definition 2; in particular, $A$ can be viewed as a supervisor over $(\Sigma_{c, A}, \Sigma_{o, A})$ and $P$ can be viewed as the transformed plant. 
\end{proof}

\begin{theorem}
Given any plant $G$ over $\Sigma$, any supervisor $S$ over control constraint $(\Sigma_c, \Sigma_o)$, any attack constraint $(\Sigma_{o, A}, \Sigma_{c, A})$ and any damage automaton $H$ over $\Sigma$, there is a damage-reachable covert actuator attacker $A$ over $(\Sigma_{o, A}, \Sigma_{c, A})$ on $(G, S)$ with respect to $H$ iff there is a supervisor $S'$ over $(\Sigma_{c, A}, \Sigma_{o, A})$ such that $S' \lVert P$ avoids reaching the set ${\bf BAD}$ of bad states in $P$ and $L_m(S' \lVert P) \neq \varnothing$, where $P=G \lVert S^A \lVert P_{\Sigma_o}(S \lVert G) \lVert H$ and ${\bf BAD}:=\{(q, x, D_Q, w) \in Q \times (X \cup \{x_{halt}\}) \times 2^Q \times W  \mid (x=x_{halt} \vee D_Q=\varnothing) \wedge w \notin W_m\}$.
\end{theorem}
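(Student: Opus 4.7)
The plan is to mirror the proof of Theorem~1 almost verbatim, swapping the non-blocking condition for the marked-language non-emptiness condition; both directions of the iff will follow from the same identification of an attacker with a supervisor. Concretely, I would invoke the observation already made in the formal setup that an actuator attacker $A = (Y, \Sigma, \beta, y_0)$ over $(\Sigma_{o, A}, \Sigma_{c, A})$ satisfies A-controllability and A-observability, which coincide syntactically with the controllability and observability constraints of a supervisor over the control constraint $(\Sigma_{c, A}, \Sigma_{o, A})$; hence candidate attackers and candidate supervisors $S'$ over $(\Sigma_{c, A}, \Sigma_{o, A})$ are in bijective correspondence via $S' := A$, and the attacked closed-loop system $G \lVert S^A \lVert P_{\Sigma_o}(S \lVert G) \lVert A \lVert H$ equals the controlled system $S' \lVert P$ for the transformed plant $P = G \lVert S^A \lVert P_{\Sigma_o}(S \lVert G) \lVert H$.

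Next, I would unpack the two defining properties of a damage-reachable covert actuator attacker under this identification. By Definition~1, covertness of $A$ is the statement that no tuple $(q, x, D_Q, y, w)$ with $(x = x_{halt} \vee D_Q = \varnothing) \wedge w \notin W_m$ is reachable in $A \lVert P$; since the attacker component $y$ appears nowhere in the specification of ${\bf BAD}$, this is precisely the statement that $S' \lVert P$ avoids ${\bf BAD}$. By Definition~3, damage-reachability of $A$ is $L_m(A \lVert P) \neq \varnothing$, which under the identification reads $L_m(S' \lVert P) \neq \varnothing$. Here I would also note the minor but necessary point that the marking of this synchronous product is inherited solely from $H$, because $G$, $S^A$, $P_{\Sigma_o}(S \lVert G)$, and $A$ all have every state marked by convention; without this observation, the translation of damage-reachability into a marked-language condition on $S' \lVert P$ would not be immediate.

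Combining these two equivalences yields both directions of the iff: from a damage-reachable covert attacker $A$, I set $S' := A$ to obtain the desired supervisor; conversely, from any supervisor $S'$ satisfying the two conditions, I read it as an attacker $A$ and the same translation certifies covertness and damage-reachability. I do not anticipate a genuine obstacle here—the argument is a direct transcription of the proof of Theorem~1 with the non-blocking clause replaced by $L_m \neq \varnothing$, and the main thing to be careful about is the book-keeping on markings just noted.
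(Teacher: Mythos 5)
Your proposal is correct and follows essentially the same route as the paper's own proof: identify the attacker $A$ with a supervisor $S'$ over $(\Sigma_{c,A},\Sigma_{o,A})$, view $P=G \lVert S^A \lVert P_{\Sigma_o}(S\lVert G) \lVert H$ as the transformed plant, and translate covertness (Definition 1) into avoidance of ${\bf BAD}$ and damage-reachability (Definition 3) into $L_m(S'\lVert P)\neq\varnothing$. Your extra remark that the marking is inherited solely from $H$ is a harmless and slightly more careful piece of book-keeping than the paper's one-line argument, but it does not change the approach.
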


\begin{proof}
This straightforwardly follows from the definition of a damage-reachable covert actuator attacker. In particular, $A$ is a damage-reachable covert actuator attacker over the attack constraint $(\Sigma_{o, A}, \Sigma_{c, A})$ on $(G, S)$ with respect to $H$ iff $A \lVert P$ avoids reaching the set ${\bf BAD}$ of bad states in $P$ and $L_m(A \lVert P) \neq \varnothing$, where $P=G \lVert S^A \lVert P_{\Sigma_o}(S \lVert G) \lVert H$ and ${\bf BAD}:=\{(q, x, D_Q, w) \in Q \times (X \cup \{x_{halt}\}) \times 2^Q \times W  \mid (x=x_{halt} \vee D_Q=\varnothing) \wedge w \notin W_m\}$, by Definition 1 and Definition 3; in particular, $A$ can be viewed as a supervisor over the control constraint $(\Sigma_{c, A}, \Sigma_{o, A})$ and $P$ can be viewed as the transformed plant. 
\end{proof}
\vspace{-4pt}
\begin{remark}
By viewing $A$ as the supervisor and $G\lVert S^A \lVert P_{\Sigma_o}(S \lVert G)  \lVert H$ as the transformed plant, Definition 2 and Definition 3 allow us to reduce the actuator attacker synthesis problems to the supervisor synthesis problems. 
We remark that the set {\bf BAD} of bad states in Theorem 1 and Theorem 2 is defined according to the definition of covertness in Definition 1, where breaking covertness, i.e.,  reaching states in {\bf BAD}, is bad for the attacker. 
\end{remark}
The reductions provided above,  combined with the maximally permissive supervisor synthesis algorithm\footnote{Theorem 1 requires the synthesis of a safe and nonblocking supervisor, while Theorem 2 only requires the synthesis of a safe and nonempty supervisor. The  maximally permissive supervisor synthesis algorithm in~\cite{YL16} is of exponential time complexity and   can be adapted for both scenarios.}  in~\cite{YL16},  provides a  doubly exponential time algorithm for the covert actuator attacker synthesis problem, for both the  damage-nonblocking goal and the damage-reachable goal. This doubly exponential time complexity seems unavoidable in general, due to two independent sources of exponentiation constructions arising from partial observation. The inner level of the exponentiation is due to partial observation by the supervisor, in the construction of $P_{\Sigma_o}(S \lVert G)$ to determine when the attacked closed-loop system is halted. The outer level of the exponentiation is due to partial observation by the attacker, whose synthesis can be viewed as the synthesis of a partial observation supervisor on the transformed plant. In general, the reductions provided in the above two theorems cannot be carried out in polynomial time. Here, the main trouble is due to the need for tracking the state in $P_{\Sigma_o}(S \lVert G)$, whose state size is in general exponential in the state size of $G$ (modulo the state space of $S$). We now shall show that, under certain  assumption on the  set of attackable events, the above reductions are polynomial time computable, regardless of the models for $S$ and $G$.

\begin{theorem}
If $\Sigma_{uo} \cap \Sigma_{c, A} =\varnothing$, then the actuator attacker synthesis problem is polynomial time reducible to the (partial-observation) supervisor synthesis problem, for both the damage-nonblocking goal and the damage-reachable goal.
\end{theorem}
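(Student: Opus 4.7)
The plan is to show that when $\Sigma_{uo} \cap \Sigma_{c,A} = \varnothing$ the exponentially large monitor component $P_{\Sigma_o}(S \lVert G)$ can be dropped from the transformed plant used in Theorem 1 and Theorem 2, yielding a polynomial-size reduction. Observe first that, together with the standing requirement $\Sigma_{c,A} \subseteq \Sigma_c$, the hypothesis gives $\Sigma_{c,A} \subseteq \Sigma_o$. Consequently, in the construction of $S^A$, case (a) is vacuous and only the $x_{halt}$-transitions of case (b) are added, on events that are observable to $S$.

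The key lemma I would establish is that, in every reachable state $(q, x, D, y, w)$ of $G \lVert S^A \lVert P_{\Sigma_o}(S \lVert G) \lVert A \lVert H$, the conditions $x = x_{halt}$ and $D = \varnothing$ are equivalent. The main point is that an observable event $\sigma$ fireable in the attacked system with $\zeta(x, \sigma)$ undefined must lie in $\Sigma_{c,A}$: otherwise $\sigma$ is either uncontrollable (excluded by controllability of $S$) or in $\Sigma_c - \Sigma_{c,A}$ and therefore enabled iff $S$ enables it, contradicting $\zeta(x, \sigma)$ being undefined. Given this, I would prove the lemma by induction on the length of a run. For $\sigma \notin \Sigma_o$, the hypothesis forces $\sigma \notin \Sigma_{c,A}$, so $S^A$ either self-loops (by observability of $S$) or refuses $\sigma$, the monitor self-loops on unobservable events, and both invariants are preserved. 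For $\sigma \in \Sigma_o$, either $\zeta(x, \sigma)$ is defined, in which case $S^A$ stays in $X$ and $\Lambda(D, \sigma)$ contains $(\zeta(x, \sigma), \delta(q, \sigma))$ and is hence nonempty; or $\zeta(x, \sigma)$ is undefined and then by the above $\sigma \in \Sigma_{c,A}$, so $S^A$ sends $x$ to $x_{halt}$ and simultaneously $\Lambda(D, \sigma) = \varnothing$, since all pairs in $D$ share the same first component $x$ by observability of $S$ and $\zeta(x, \sigma)$ is undefined.

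Granted the lemma, the disjunction $(x = x_{halt} \vee D_Q = \varnothing)$ in the definition of $\mathbf{BAD}$ collapses to $x = x_{halt}$, so $P_{\Sigma_o}(S \lVert G)$ contributes nothing to the reachable behavior relevant to covertness and can be discarded. I would therefore propose the reduced transformed plant $P' := G \lVert S^A \lVert H$, of size $O(|Q|(|X|+1)|W|)$, together with $\mathbf{BAD}' := \{(q, x, w) \mid x = x_{halt} \wedge w \notin W_m\}$, and restate the analogues of Theorem 1 and Theorem 2 relative to $(P', \mathbf{BAD}')$. Both the damage-nonblocking and damage-reachable covert actuator attacker synthesis problems then reduce, in polynomial time, to partial-observation supervisor synthesis over $(\Sigma_{c,A}, \Sigma_{o,A})$ on the polynomial-size plant $P'$, under the respective safety-plus-nonblocking or safety-plus-nonemptiness objective, establishing Theorem 3.

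The main obstacle is the inductive step of the lemma: one must carefully argue that the only way for the monitor's belief to collapse is via the same observable enablement-attacks that drive $S^A$ into $x_{halt}$, and in particular that belief-collapse cannot happen via an observable \emph{non-attackable} event. Once this synchronization between $S^A$'s halt mechanism and $P_{\Sigma_o}(S \lVert G)$'s belief-collapse mechanism is established, it is essentially automatic that the monitor component is redundant and can be stripped from the reduction, delivering the polynomial-time claim.
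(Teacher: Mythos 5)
Your proposal is correct and follows essentially the same route as the paper: under the hypothesis, the monitor's belief can only collapse via an observable attackable event that simultaneously drives $S^A$ into $x_{halt}$, so $P_{\Sigma_o}(S\lVert G)$ is redundant and the reduction can run on the polynomial-size plant $G \lVert S^A \lVert H$. The only point to make explicit when writing up your induction is the strengthened invariant that $(x,q)\in D$ whenever $x\neq x_{halt}$ (needed for your claim that $\Lambda(D,\sigma)\neq\varnothing$ when $\zeta(x,\sigma)$ is defined); this holds precisely because $\Sigma_{uo}\cap\Sigma_{c,A}=\varnothing$ rules out unobservable drift of the plant away from the supervisor's state estimate, which is the same observation the paper's proof relies on.
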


\begin{proof}
Consider the transformed plant $G \lVert S^A \lVert P_{\Sigma_o}(S \lVert G) \lVert H$ in Theorem 1 and Theorem 2. We  restore the component $2^{X \times Q}$ for analysis; then a state of $G \lVert S^A \lVert P_{\Sigma_o}(S \lVert G) \lVert H$ is of the form $(q, x, D, w) \in Q \times (X \cup \{x_{halt}\}) \times 2^{X \times Q} \times W$. Let $dom(D):=\{x \in X \mid \exists q \in Q, (x, q) \in D\}$. We know that $dom(D)$ is the singleton $\{x\}$, when restricted to the reachable state set of $G \lVert S^A \lVert P_{\Sigma_o}(S \lVert G) \lVert H$ and when $D \neq \varnothing$. Let $Img(D):=\{q \in D \mid \exists x \in X, (x, q) \in D\}$. 

When the belief of the supervisor is  $\varnothing \neq D \subseteq X \times Q$ and event  $\sigma \in \Sigma_o$ is executed with $\lambda(D, \sigma)=\varnothing$, the supervisor detects an inconsistency and determines that attack has necessarily happened. There are two possible reasons for $\lambda(D, \sigma)=\varnothing$. 
The first reason is when $\zeta(x, \sigma)$ is undefined for the  unique state $x \in dom(D)$. This has already been taken care of by the component $S^A$. The second reason is when $\delta(Img(D), \sigma)=\varnothing$, which can be tracked with the component $P_{\Sigma_o}(S \lVert G)$. However, since  $\sigma$ has been executed at the current plant state $q \in Q$, we conclude that $q \notin Img(D)$. That is, the current state the plant is in does not belong to the set of states the supervisor believes the plant to be residing in. This is only possible if some event $\sigma' \in \Sigma_{uo} \cap \Sigma_{c, A}$ has been enabled by the actuator attacker but disabled by the supervisor in the past execution. If $\Sigma_{uo} \cap \Sigma_{c, A}=\varnothing$, then the above possibility can be ruled out. 

Thus, we can  effectively get rid of the component $2^Q$ (associated with the use of $P_{\Sigma_o}(S \lVert G)$) in the construction of the transformed plant, when
$\Sigma_{uo} \cap \Sigma_{c, A}=\varnothing$, since the halting state $x_{halt}$ in the component $S^A$ is sufficient for detecting the presence of an actuator attacker in this case. It then follows that the reductions provided in Theorems 1 and 2 are polynomial time reductions when $\Sigma_{uo} \cap \Sigma_{c, A}=\varnothing$. 
\end{proof}
Intuitively, if the actuator attacker cannot attack those unobservable events to the supervisor, then the reductions provided in Theorem 1 and Theorem 2, after removing the component $2^Q$, are polynomial time reductions.
In the following remark, we shall explain one implication of Theorem 3.
\begin{remark}
\label{remark: poly}
If $\Sigma_{uo} \cap \Sigma_c=\varnothing$,
that is, when the given supervisor $S$ satisfies the normality property~\cite{WMW10}, then we also have $\Sigma_{uo} \cap \Sigma_{c, A}=\varnothing$, since $\Sigma_{c, A} \subseteq \Sigma_c$. We consider this to be not quite restrictive. 
A supervisor satisfying the normality property leaves much reduced attack surfaces for the actuator attacker~\cite{Lin2018},
since events that are disabled (by the supervisor) are observable to the
supervisor and enabling those disabled (attackable) events (by the actuator attacker)
may immediately break the covertness of the actuator attacker, if they are also enabled by the
plant. Thus, being suspicious of the presence of an attacker, the designer may prefer to synthesize supervisors that satisfy the normality property. 
\end{remark}
Based on Theorem 3 and the discussion after Remark 1, we immediately have the following result.
\begin{corollary}
If $\Sigma_{uo} \cap \Sigma_{c, A} =\varnothing$, then the actuator attacker synthesis problem is solvable in exponential time, for both the damage-nonblocking goal and the damage-reachable goal.
\end{corollary}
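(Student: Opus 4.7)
The plan is to obtain the result as a direct composition of two ingredients already available in the excerpt: the polynomial-time reduction furnished by Theorem~3, and the exponential-time partial-observation supervisor synthesis algorithm of~\cite{YL16} that is discussed in the paragraph following Remark~1. Concretely, given an instance $(G, S, H, (\Sigma_c, \Sigma_o), (\Sigma_{o,A}, \Sigma_{c,A}))$ of the covert actuator attacker synthesis problem satisfying $\Sigma_{uo} \cap \Sigma_{c,A} = \varnothing$, I first invoke Theorem~3 to produce, in time polynomial in $|G|+|S|+|H|$, a reduced transformed plant $P'$ (obtained from $G \lVert S^A \lVert P_{\Sigma_o}(S \lVert G) \lVert H$ by dropping the $2^{X \times Q}$ component, as justified in the proof of Theorem~3) together with the specification set ${\bf BAD}$ restricted accordingly. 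This gives an equivalent instance of the Ramadge--Wonham supervisor synthesis problem over $(\Sigma_{c,A}, \Sigma_{o,A})$, whose input size is polynomial in that of the original data.

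Next I apply the maximally permissive partial-observation supervisor synthesis procedure cited in the discussion following Theorem~2. That procedure runs in time exponential in the size of its input plant and is, as noted there, adaptable to both the safe-and-nonblocking variant needed for the damage-nonblocking goal (Theorem~1) and the safe-and-nonempty variant needed for the damage-reachable goal (Theorem~2). Since $|P'|$ is polynomial in $|G|+|S|+|H|$, this second step runs in time $2^{\mathrm{poly}(|G|+|S|+|H|)}$. Composing the two steps yields an overall exponential-time algorithm: a covert attacker meeting the desired damaging goal is returned whenever Theorem~1 (respectively Theorem~2) guarantees existence, and failure of the synthesis procedure to return a nontrivial supervisor certifies nonexistence.

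The main subtlety, and the only thing one must double-check, is that the supervisor returned by the synthesis algorithm, when reinterpreted as an automaton over $(\Sigma_{o,A}, \Sigma_{c,A})$, is indeed an actuator attacker in the sense of Section~\ref{subs: FS}; that is, it satisfies A-controllability and A-observability. This is immediate because the algorithm produces an object satisfying the controllability and observability constraints with respect to the supplied control constraint $(\Sigma_{c,A}, \Sigma_{o,A})$, and these are syntactically identical to A-controllability and A-observability with respect to the attack constraint $(\Sigma_{o,A}, \Sigma_{c,A})$, as already observed right after the definition of A-observability. Hence no additional work is needed, and the corollary follows by combining Theorem~3 with the exponential-time synthesis result of~\cite{YL16}.
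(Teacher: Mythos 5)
Your proposal is correct and follows essentially the same route as the paper, which derives the corollary immediately by combining the polynomial-time reduction of Theorem~3 with the exponential-time maximally permissive supervisor synthesis algorithm of~\cite{YL16} discussed after Remark~1. The additional check that the synthesized supervisor over $(\Sigma_{c,A},\Sigma_{o,A})$ is literally an actuator attacker is a reasonable point to make explicit, but it is already covered by the paper's observation that the two notions coincide syntactically.
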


In general, there are two possible situations where the exponential blowup of state sizes can be avoided in the construction of the transformed plant $G \lVert S^A \lVert P_{\Sigma_o}(S \lVert G) \lVert H$. The first situation is when $P_{\Sigma_o}(S \lVert G)$ is of polynomial size. An example is when  $P_{\Sigma_o}$ is an $L(S\lVert G)$-observer~\cite{LW10}; in this case, the size of $P_{\Sigma_o}(S \lVert G)$ is upper bounded by the state size of $S \lVert G$. The second situation is when for any reachable state $(q, x, D, w) \in Q \times (X \cup \{x_{halt}\}) \times 2^{X \times Q} \times W$ and  any $\sigma \in \Sigma_o$ such that $\sigma$ is defined at $(q, x, D, w)$, we have  $\delta(Img(D), \sigma) \neq \varnothing$; in this case, the component $P_{\Sigma_o}(S \lVert G)$ can be safely removed in the construction of the transformed plant, as analyzed in the proof of Theorem 3. An example is when $\Sigma_{uo} \cap \Sigma_{c, A}=\varnothing$, as shown in Theorem 3. For the second situation, in general we need to first construct $G \lVert S^A \lVert P_{\Sigma_o}(S \lVert G) \lVert H$ and then perform the corresponding (straightforward) verification. If the condition is indeed satisfied, then we can safely remove the component $P_{\Sigma_o}(S \lVert G)$ and perform supervisor synthesis based on $G \lVert S^A \lVert H$. In both situations, the actuator attacker synthesis problem is solvable in  exponential time, for both the damage-nonblocking goal and the damage-reachable goal. We here remark that the above two situations are not exclusive. In general, it is  possible that $P_{\Sigma_o}(S \lVert G)$ is of polynomial size and still can be removed in the construction of the transformed plant (according to the second situation).

\section{ Synthesis of Eavesdropping Actuator Attackers}
\label{section: eavesdrop}
\vspace{-4pt}
In this section, we  consider the synthesis of successful covert actuator attackers that not only  partially observes the  execution of the closed-loop system  but also can eavesdrop the control commands issued by the supervisor, following the setup  of~\cite{Lin2018}. As we have explained before, control command eavesdropping attackers can be considered as powerful attackers in the sense of~\cite{HMR18}; in particular, eavesdropping attackers can decode the information encoded in the issued control commands to possibly gain part of the knowledge of the supervisors on the execution of the plant. 

Since the attackers can eavesdrop the control commands, it is important to discuss when the control commands are sent to the plant. The control command generated at each supervisor state $x \in X$ is  $\Gamma(x):=\{\sigma \in \Sigma \mid \zeta(x, \sigma)!\}$. We assume that whenever the supervisor fires an observable transition $\zeta(x, \sigma)=x'$, it sends the newly generated control command $\Gamma(x')$ to the plant\footnote{In another setup~\cite{LZS19}, the supervisor sends a control command each time when it fires an observable transition $\zeta(x, \sigma)=x'$ satisfying $\Gamma(x) \neq \Gamma(x')$. To solve the (covert) actuator attacker synthesis problem for this setup requires the concept of dynamic observation~\cite{AVW03} that is beyond the scope of this work.}. When the system first initiates, the supervisor sends the initial control command $\Gamma(x_0)$ to the plant. This setup is a generalization of that of~\cite{Lin2018} and we do not impose any assumption on the control constraint $(\Sigma_c, \Sigma_o)$ or the attack constraint $(\Sigma_{o, A}, \Sigma_{c, A})$, except for the natural requirement that $\Sigma_{c, A} \subseteq \Sigma_c$  and the technical requirement that $\Sigma_{o, A} \subseteq \Sigma_o$. In particular, we do not require normality $\Sigma_c \subseteq \Sigma_o, \Sigma_{c, A} \subseteq \Sigma_{o, A}$  in~\cite{Lin2018}.

We here shall employ a  technique that we refer to as supervisor bipartization to solve the (covert) actuator attacker synthesis problem for this setup.
Intuitively, we need to perform a (bipartization) transformation of the supervisor so that the issued control commands  become a part of the alphabet that can be observed by the attacker, while the control function of the bipartite supervisor (on the plant) remains the same as the (ordinary) supervisor. Given any supervisor $S=(X, \Sigma, \zeta, x_0)$, its bipartization is given by the finite state automaton $BT(S)=(X \cup X_{com}, \Sigma \cup \Gamma, \zeta^{BT}, x_{0,com})$, where $X_{com}=\{x_{com} \mid x \in X\}$ is a (relabelled) copy of $X$ with $X \cap X_{com}=\varnothing$, $x_{0, com} \in X_{com}$ is a copy of $x_0 \in X$ and is the initial state of $BT(S)$, and
the partial transition function $\zeta^{BT}: (X \cup X_{com}) \times (\Sigma \cup \Gamma) \longrightarrow (X \cup X_{com})$ is defined as follows:
\begin{enumerate}
    \item for any $x_{com} \in X_{com}$, $\zeta^{BT}(x_{com},\Gamma(x))=x$
    \item for any $x \in X$ and any $\sigma \in \Sigma_{uo}$, $\zeta^{BT}(x, \sigma)=\zeta(x,\sigma)$
    \item for any $x \in X$ and any $\sigma \in \Sigma_{o}$, $\zeta^{BT}(x, \sigma)=\zeta(x,\sigma)_{com}$ 
\end{enumerate}
Intuitively, each $x_{com}$ is the control state  corresponding to $x$ that is ready to issue the control command $\Gamma(x)$; the transition $\zeta^{BT}(x_{com},\Gamma(x))=x$ represents the event of the supervisor sending the control command $\Gamma(x)$ to the plant. Each $x \in X$ is a reaction state that is ready to react to an event in the issued control command $\Gamma(x)$ executed by the plant $G$. For any $x \in X$ and $\sigma \in \Sigma$, the supervisor reacts to the corresponding transition fired in the plant. If $\sigma \in \Sigma_{uo}$ is defined at $x \in X$, then the supervisor observes nothing and remains in the same reaction state $\zeta(x,\sigma)=x \in X$; if $\sigma \in \Sigma_o$ is defined at $x \in X$, then the supervisor proceeds to the next control state $\zeta(x,\sigma)_{com}$. We remark that $\zeta^{BT}(x, \sigma)$ is defined iff $\zeta(x, \sigma)$ is defined, for any $x\in X$ and any $\sigma \in \Sigma$. If we abstract $BT(S)$ by merging the states $x_{com}$ and $x$, treated as equivalent states in the abstraction, then we can recover  $S$. In this sense, $BT(S)$ is control equivalent to $S$. We shall refer to $BT(S)$ as a bipartite supervisor over $\Sigma \cup \Gamma$.

A control command eavesdropping actuator attacker over attack constraint $(\Sigma_{o, A}, \Sigma_{c, A})$ is modeled by an (ordinary) actuator attacker $A=(Y, \Sigma \cup \Gamma, \beta, y_0)$ over attack constraint $(\Sigma_{o, A} \cup \Gamma, \Sigma_{c, A})$. Thus, it satisfies the following constraints.
\begin{enumerate}
    \item [$\bullet$] ({\em A-controllability}) for any state $y \in Y$ and any unattackable event $\sigma \in \Sigma \cup \Gamma-\Sigma_{c, A}$, $\beta(y, \sigma)!$,
\item [$\bullet$] ({\em A-observability}) for any state $y \in Y$ and any unobservable event $\sigma \in (\Sigma\cup \Gamma)-(\Sigma_{o, A} \cup \Gamma)=\Sigma-\Sigma_{o, A}$ to the attacker, $\beta(y, \sigma)!$ implies $\beta(y, \sigma)=y$,
\end{enumerate}
The monitoring mechanism is exactly the same as that in the non-eavesdropping case. For the bipartite supervisor $BT(S)$, we still need to compute the (totally) enablement-attacked bipartite supervisor $BT(S)^A=(X \cup X_{com} \cup \{x_{halt}\}, \Sigma \cup \Gamma, \zeta^{BT, A}, x_{0,com})$, where $x_{halt} \notin X \cup X_{com}$ is the distinguished halting state and the partial transition function $\zeta^{BT, A}: (X \cup X_{com} \cup \{x_{halt}\}) \times (\Sigma \cup \Gamma) \longrightarrow (X \cup X_{com} \cup \{x_{halt}\})$ is obtained from $\zeta^{BT}$ by adding the following transitions (in addition to a copy of the transitions of $\zeta^{BT}$). 
\begin{enumerate}
    \item [a)] for any $x \in X$ and any $\sigma \in \Sigma_{uo} \cap \Sigma_{c, A}$, if $\neg \zeta^{BT}(x, \sigma)!$, then $\zeta^{BT,A}(x, \sigma)=x$
    \item [b)] for any $x \in X$ and any $\sigma \in \Sigma_{o} \cap \Sigma_{c, A}$, if $\neg \zeta^{BT}(x, \sigma)!$, then $\zeta^{BT,A}(x, \sigma)=x_{halt}$
\end{enumerate}
 We here shall remark that, for bipartite supervisors, no modification shall be made to the control states $x \in X_{com}$. The attacked closed-loop system is $G\lVert BT(S)^A \lVert P_{\Sigma_o}(S \lVert G) \lVert A$. Finally, given the damage automaton $H=(W, \Sigma, \chi, w_0, W_m)$, we need to compute the synchronous product $G\lVert BT(S)^A \lVert P_{\Sigma_o}(S \lVert G) \lVert A \lVert H$ to determine whether the attacker $A$ is successful, with respect to the damage-nonblocking goal or the damage-reachable goal. Intuitively, the attacker $A$ can be viewed as a supervisor over control constraint $(\Sigma_{c, A}, \Sigma_{o, A} \cup \Gamma)$ and $G\lVert BT(S)^A \lVert P_{\Sigma_o}(S \lVert G) \lVert H$ can be viewed as the transformed plant over $\Sigma \cup \Gamma$. By a similar reasoning as before, we have the following theorems. 
\begin{theorem}
Given any plant $G$ over $\Sigma$, any supervisor $S$ over control constraint $(\Sigma_c, \Sigma_o)$, any attack constraint $(\Sigma_{o, A}, \Sigma_{c, A})$ and any damage automaton $H$ over $\Sigma$, there is a damage-nonblocking covert control command eavesdropping actuator attacker $A$ over $(\Sigma_{o, A}, \Sigma_{c, A})$ on $(G, S)$ with respect to $H$ iff there is a supervisor $S'$ over $(\Sigma_{c, A}, \Sigma_{o, A} \cup \Gamma)$ such that $S' \lVert P$ avoids reaching the set ${\bf BAD}$ of bad states in $P$ and $S' \lVert P$ is non-blocking, where $P=G \lVert BT(S)^A \lVert P_{\Sigma_o}(S\lVert G) \lVert H$ and ${\bf BAD}:=\{(q, x, D_Q, w) \in Q \times (X \cup X_{com} \cup \{x_{halt}\}) \times 2^Q \times W  \mid (x=x_{halt} \vee D_Q=\varnothing) \wedge w \notin W_m\}$.
\end{theorem}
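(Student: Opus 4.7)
The plan is to mirror the proof of Theorem 1 almost verbatim, with the bipartization construction $BT(S)$ doing the heavy lifting of reducing the eavesdropping setting to an ordinary partial-observation synthesis problem. The central idea is that, once we enrich the alphabet to $\Sigma \cup \Gamma$ and replace $S$ by $BT(S)$, a control-command-eavesdropping attacker over $(\Sigma_{o,A}, \Sigma_{c,A})$ is, by the very definition given in the excerpt, just an ordinary (non-eavesdropping) attacker over attack constraint $(\Sigma_{o,A} \cup \Gamma, \Sigma_{c,A})$, which in turn is a supervisor over control constraint $(\Sigma_{c,A}, \Sigma_{o,A} \cup \Gamma)$ acting on the transformed plant.

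First, I would record two structural facts that justify the substitutions. (i) \emph{Control-equivalence of $BT(S)$ and $S$}: by the construction, every occurrence of a plant event $\sigma$ in $BT(S)$ is preceded by the unique $\Gamma(x)$-transition out of the control state $x_{com}$, so that the projection of any execution of $G \lVert BT(S)$ onto $\Sigma$ coincides with an execution of $G \lVert S$; enablement of attackable events by $BT(S)^A$ matches enablement by $S^A$ in the same sense. (ii) \emph{Reuse of the monitoring component}: the supervisor's halting decision is based solely on its observations $w \in \Sigma_o^*$ of plant events, so the belief tracker $P_{\Sigma_o}(S\lVert G)$ is unchanged when moving to the eavesdropping setup; the $\Gamma$-transitions that are now interleaved in the execution are invisible to this monitor.

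Next, I would argue both directions of the ``iff'' exactly as in Theorem 1. For the forward direction, given a damage-nonblocking covert eavesdropping attacker $A$, Definitions~1 and~2 say that no state in ${\bf BAD}$ is reachable in $G \lVert BT(S)^A \lVert P_{\Sigma_o}(S \lVert G) \lVert A \lVert H$ and that this product is non-blocking. Since $A$ satisfies A-controllability and A-observability with respect to $(\Sigma_{o,A} \cup \Gamma, \Sigma_{c,A})$ over the alphabet $\Sigma \cup \Gamma$, it qualifies as a supervisor $S' := A$ over control constraint $(\Sigma_{c,A}, \Sigma_{o,A} \cup \Gamma)$, giving the required $S' \lVert P$ that avoids ${\bf BAD}$ and is non-blocking, where $P = G \lVert BT(S)^A \lVert P_{\Sigma_o}(S\lVert G) \lVert H$. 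Conversely, any such $S'$ can be taken as the attacker $A$, because the controllability and observability conditions on $S'$ over $(\Sigma_{c,A}, \Sigma_{o,A} \cup \Gamma)$ are exactly the A-controllability and A-observability conditions required of a control-command-eavesdropping attacker.

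The main obstacle, and the only place where the proof is not a literal transcription of Theorem 1's, is verifying that the bipartization faithfully encodes the attacker's view of control commands into standard synchronous product semantics. Specifically, I need to check that $\Gamma \cap \Sigma = \varnothing$ (so that the fresh command-events do not collide with plant events), that $\Gamma$-labelled transitions are neither attackable (since $\Sigma_{c,A} \subseteq \Sigma_c \subseteq \Sigma$) nor unobservable to the attacker (since $\Gamma \subseteq \Sigma_{o,A} \cup \Gamma$), and that the timing convention --- the supervisor emits $\Gamma(x')$ exactly upon firing an observable transition ending in control state $x'_{com}$ --- is correctly reflected by clause~(3) of the $\zeta^{BT}$ definition together with clause~(1). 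Once these bookkeeping properties are confirmed, the rest of the argument reduces, just as the authors' remark for Theorems 1 and 2 suggests, to viewing $A$ as a supervisor and $P$ as the transformed plant.
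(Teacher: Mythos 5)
Your proposal is correct and follows essentially the same route as the paper: the authors prove Theorem 4 by simply observing that, by definition, a control-command eavesdropping attacker over $(\Sigma_{o,A},\Sigma_{c,A})$ is an ordinary attacker over $(\Sigma_{o,A}\cup\Gamma,\Sigma_{c,A})$, hence a supervisor over $(\Sigma_{c,A},\Sigma_{o,A}\cup\Gamma)$ acting on the transformed plant $G\lVert BT(S)^A\lVert P_{\Sigma_o}(S\lVert G)\lVert H$, and then invoke the same definitional unfolding as in Theorem 1. The bookkeeping checks you flag (disjointness of $\Gamma$ from $\Sigma$, observability and unattackability of $\Gamma$-events, and the timing of command emission encoded by $\zeta^{BT}$) are all consistent with the paper's construction and only make explicit what the authors leave implicit in their remark ``by a similar reasoning as before.''
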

\begin{theorem}
Given any plant $G$ over $\Sigma$, any supervisor $S$ over control constraint $(\Sigma_c, \Sigma_o)$, any attack constraint $(\Sigma_{o, A}, \Sigma_{c, A})$ and any damage automaton $H$ over $\Sigma$, there is a damage-reachable covert control command eavesdropping actuator attacker $A$ over attack constraint $(\Sigma_{o, A}, \Sigma_{c, A})$ on $(G, S)$ with respect to $H$ iff there is a supervisor $S'$ over $(\Sigma_{c, A}, \Sigma_{o, A} \cup \Gamma)$ such that $S' \lVert P$ avoids reaching the set ${\bf BAD}$ of bad states in $P$ and $L_m(S' \lVert P) \neq \varnothing$, where $P=G \lVert BT(S)^A \lVert P_{\Sigma_o}(S \lVert G) \lVert H$ and ${\bf BAD}:=\{(q, x, D_Q, w) \in Q \times (X \cup X_{com} \cup \{x_{halt}\}) \times 2^Q \times W  \mid (x=x_{halt} \vee D_Q=\varnothing) \wedge w \notin W_m\}$.
\end{theorem}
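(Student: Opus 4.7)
The plan is to mirror the proof strategy of Theorem 2, exploiting the unified modeling that makes a control-command-eavesdropping actuator attacker simultaneously interpretable as a Ramadge--Wonham supervisor, with the bipartization $BT(S)^A$ playing the role previously played by $S^A$ and the alphabet enlarged from $\Sigma$ to $\Sigma \cup \Gamma$. First I would unpack the adapted definitions of covertness and damage-reachability in the eavesdropping setup: from the preamble to the theorem, the attacked closed-loop system is $G \lVert BT(S)^A \lVert P_{\Sigma_o}(S \lVert G) \lVert A$, so covertness forbids reaching $\mathbf{BAD}$ in $G \lVert BT(S)^A \lVert P_{\Sigma_o}(S \lVert G) \lVert A \lVert H$, and the damage-reachable condition demands that some marked state of this same product be reachable, i.e.\ $L_m(G \lVert BT(S)^A \lVert P_{\Sigma_o}(S \lVert G) \lVert A \lVert H) \neq \varnothing$.

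Next I would invoke the dual viewpoint made explicit in Section~\ref{section: eavesdrop}: a control-command eavesdropping actuator attacker $A$ over $(\Sigma_{o,A}, \Sigma_{c,A})$ is by definition an ordinary actuator attacker over the (enlarged) attack constraint $(\Sigma_{o,A} \cup \Gamma, \Sigma_{c,A})$ on the alphabet $\Sigma \cup \Gamma$, and such an $A$ is symmetrically a partial-observation supervisor over the control constraint $(\Sigma_{c,A}, \Sigma_{o,A} \cup \Gamma)$. Writing $P := G \lVert BT(S)^A \lVert P_{\Sigma_o}(S \lVert G) \lVert H$ and using commutativity/associativity of $\lVert$, the conjunction of covertness and damage-reachability is precisely the conjunction ``$A \lVert P$ avoids $\mathbf{BAD}$'' and ``$L_m(A \lVert P) \neq \varnothing$''.

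The two directions then reduce to a simple identification. For the forward direction, given a damage-reachable covert eavesdropping attacker $A$, I would set $S' := A$ viewed as a supervisor over $(\Sigma_{c,A}, \Sigma_{o,A} \cup \Gamma)$; the two synthesis conditions in the theorem statement hold by the rewrite above. For the reverse direction, given a supervisor $S'$ witnessing the right-hand side, I would set $A := S'$ viewed as an actuator attacker: A-controllability of $A$ follows from controllability of $S'$ with respect to $(\Sigma \cup \Gamma) - \Sigma_{c,A}$, and A-observability follows from observability of $S'$ with respect to $(\Sigma \cup \Gamma) - (\Sigma_{o,A} \cup \Gamma) = \Sigma - \Sigma_{o,A}$, yielding exactly a damage-reachable covert eavesdropping attacker.

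The only step that needs genuine care, and the one I would flag as the main potential pitfall, is verifying that $P$ is a legitimate ``transformed plant'' for the reduction: one must check that $BT(S)^A$ correctly absorbs the effect of arbitrary enablement attacks on the bipartite supervisor (which is immediate from its construction, since the added self-loops and halting transitions are appended only at reaction states $x \in X$, not at control states $x_{com} \in X_{com}$), and that the monitor is still faithfully modeled by $P_{\Sigma_o}(S \lVert G)$ rather than by $P_{\Sigma_o}(BT(S) \lVert G)$ -- this is the case because the supervisor's monitor observes only plant events in $\Sigma_o$ and not its own emitted control commands in $\Gamma$, so the belief-update dynamics are unchanged from the non-eavesdropping setting. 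Once these sanity checks are in place, the argument is a one-line appeal to the adapted Definition 1 and the damage-reachable analogue of Definition 3, in exact parallel to the proof of Theorem 2.
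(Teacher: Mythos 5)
Your proposal is correct and follows essentially the same route as the paper, which omits an explicit proof for this theorem and simply appeals to ``a similar reasoning as before,'' i.e., the definitional identification of the eavesdropping attacker $A$ with a supervisor over $(\Sigma_{c,A}, \Sigma_{o,A}\cup\Gamma)$ acting on the transformed plant $P = G \lVert BT(S)^A \lVert P_{\Sigma_o}(S\lVert G) \lVert H$, exactly as in the proof of Theorem 2. Your additional sanity checks on $BT(S)^A$ and on the monitor being $P_{\Sigma_o}(S\lVert G)$ rather than $P_{\Sigma_o}(BT(S)\lVert G)$ are consistent with the paper's construction and make the argument slightly more explicit than the paper's own treatment.
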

The same conclusions on the complexity issues as in the non-eavesdropping case, which are presented in Section 4, apply to the control command eavesdropping case. Due to serious space restriction, we here shall omit the discussions. 
\begin{remark}
The approach that uses bipartization of supervisors is rather versatile~\cite{ZLWS19}. 
Here, we shall explain how the supervisor bipartization approach can also be used for solving the  actuator attacker synthesis problem in Section 4 (thus treating the analysis in Section 4 as a special case of the analysis carried out in Section 5). Indeed, for the non-eavesdropping case, we only need to treat $\Gamma$ as unobservable to the attacker. By merging the states $x$ and $x_{com}$ in $BT(S)^A$, as the only transition between them is both unobservable and uncontrollable to the attacker, we  immediately recover $S^A$; thus, Theorem 4 and Theorem 5 are now immediately reduced to Theorem 1 and Theorem 2, respectively, for the non-eavesdropping case. The case when the supervisor sends a control command each time when it fires an observable transition $\zeta(x, \sigma)=x'$ satisfying $\Gamma(x) \neq \Gamma(x')$ can be dealt with using exactly the same construction, but requires the notion of dynamic observation~\cite{AVW03} that is beyond the scope of this work.
\end{remark}

\vspace{-12pt}
\section{Discussions and Conclusions}
\label{section:DC}
\vspace{-4pt}
In this paper, we have presented a formalization of the covert actuator attacker synthesis problem, in the formalism of discrete-event systems, for cyber-physical systems. In particular, we have addressed the problems of synthesis of covert actuator attackers with the damage-nonblocking goal and with the damage-reachable goal, respectively. It is worth noting that the technique to solve the actuator attacker synthesis problem is by reduction to the well-studied supervisor synthesis problem, instead of developing new synthesis algorithms. In the rest of this section, we provide a discussion of some future works that can be carried out. 

{\bf Synthesis of (Powerful) Actuator and Sensor Attackers}: This work only considers the synthesis of actuator attackers. One of the immediate future works for us is to consider the problem of  synthesis of (powerful) combined actuator and sensor attackers~\cite{LZS19}. The solution to the attacker synthesis problem has immediate applications to the resilient supervisor synthesis problem, as explained in~\cite{LZS19}. 


{\bf Resilient Supervisor Synthesis}: The problem of resilient supervisor synthesis can be better understood.  In~\cite{LZS19}, a bounded resilient supervisor synthesis approach has been developed as a semi-decision procedure for  the synthesis of resilient supervisors against powerful actuator and sensor attackers. It is also of interest for us to explore  other (semi-)decision procedures for resilient supervisor synthesis.

Finally, we shall remark that the problem of covertly attacking  closed-loop systems in ``continuous" control theory is nothing new~\cite{S11},~\cite{TSSJ12},~\cite{ALSB10},~\cite{PDB13},~\cite{MKBDLPS12}. It is of interest to borrow the ideas from theses studies and enrich our discrete-event systems based theory; we hope that the discrete-event based approach can also inspire further research in ``continous" control theory based security of cyber-physical systems, e.g., \begin{samepage}
 the notion and synthesis of a supremal attacker~\cite{Lin2018}. An interesting research direction is to extend our framework to hybrid or probabilistic models~\cite{FKNP11}. 

\begin{acknowledgements}
This work is financially supported by Singapore Ministry of Education Academic Research Grant RG91/18-(S)-SU RONG (VP), which is gratefully acknowledged. 
\end{acknowledgements}
\vspace{-4pt}

\addtolength{\textheight}{-10cm}   
\vspace{-10pt}

\end{samepage}

\end{document}